\tikzstyle{gate}=[shape=rectangle, text height=1.5ex, text depth=0.25ex, yshift=0.5mm, fill=white, draw=black, minimum height=3mm, yshift=-0.5mm, minimum width=3mm, font={\small}, tikzit category=circuit, inner sep=2pt]
\tikzstyle{big gate}=[shape=rectangle, text height=1.5ex, text depth=0.25ex, yshift=0.5mm, fill=white, draw=black, minimum height=10mm, yshift=-0.5mm, minimum width=5mm, font={\small}, tikzit category=circuit]
\tikzstyle{Z dot}=[inner sep=0mm, minimum size=2mm, shape=circle, draw=black, fill={rgb,255: red,221; green,255; blue,221}, tikzit category=zx]
\tikzstyle{Z phase dot}=[minimum size=5mm, font={\footnotesize\boldmath}, shape=rectangle, rounded corners=2mm, inner sep=0.2mm, outer sep=-2mm, scale=0.8, tikzit shape=circle, draw=black, fill={rgb,255: red,221; green,255; blue,221}, tikzit draw=blue, tikzit category=zx]
\tikzstyle{X dot}=[Z dot, shape=circle, draw=black, fill={rgb,255: red,255; green,136; blue,136}, tikzit category=zx]
\tikzstyle{X phase dot}=[Z phase dot, tikzit shape=circle, tikzit draw=blue, fill={rgb,255: red,255; green,136; blue,136}, font={\footnotesize\boldmath}, tikzit category=zx]
\tikzstyle{hadamard}=[fill=yellow, draw=black, shape=rectangle, inner sep=0.6mm, minimum height=1.5mm, minimum width=1.5mm, tikzit category=zx]
\tikzstyle{paulibox}=[fill={rgb,255: red,221; green,221; blue,255}, draw=black, shape=rectangle, inner sep=0.6mm, minimum height=5mm, minimum width=5mm, font={\footnotesize}, text height=1.5ex, text depth=0.25ex, tikzit category=zx]
\tikzstyle{vertex}=[inner sep=0mm, minimum size=1mm, shape=circle, draw=black, fill=black, tikzit category=misc]
\tikzstyle{vertex set}=[inner sep=0mm, minimum size=1mm, shape=circle, draw=black, fill=white, font={\footnotesize\boldmath}, tikzit category=misc]
\tikzstyle{small black dot}=[fill=black, draw=black, shape=circle, inner sep=0pt, minimum width=1.2mm, tikzit category=circuit]
\tikzstyle{cnot ctrl}=[fill=black, draw=black, shape=circle, inner sep=0pt, minimum width=1.2mm, tikzit category=circuit]
\tikzstyle{cnot targ}=[fill=white, draw=white, shape=circle, tikzit category=circuit, label={center:$\oplus$}, inner sep=0pt, minimum width=2.1mm, tikzit fill={rgb,255: red,102; green,204; blue,255}, tikzit draw=black]
\tikzstyle{ket}=[fill=white, draw=black, shape=regular polygon, regular polygon sides=3, regular polygon rotate=-30, scale=0.7, inner sep=1pt, tikzit category=circuit, tikzit shape=rectangle, tikzit fill=green]
\tikzstyle{bra}=[fill=white, draw=black, shape=regular polygon, regular polygon sides=3, regular polygon rotate=30, scale=0.7, inner sep=1pt, tikzit category=circuit, tikzit shape=rectangle, tikzit fill=red]
\tikzstyle{scalar}=[shape=rectangle, text height=1.5ex, text depth=0.25ex, yshift=0.5mm, fill=white, draw=black, minimum height=5mm, yshift=-0.5mm, minimum width=5mm, font={\small}]
\tikzstyle{clabel}=[fill=white, draw=none, shape=rectangle, tikzit fill={rgb,255: red,56; green,255; blue,242}, font={\footnotesize}, inner sep=1pt, tikzit category=labels]
\tikzstyle{empty diagram}=[draw={gray!40!white}, dashed, shape=rectangle, minimum width=1cm, minimum height=1cm, tikzit category=misc]
\tikzstyle{amap}=[fill=white, draw=black, shape=NEbox, tikzit category=asymmetric, tikzit fill=yellow, tikzit shape=rectangle]
\tikzstyle{amap conj}=[fill=white, draw=black, shape=NWbox, tikzit category=asymmetric, tikzit fill=green, tikzit shape=rectangle]
\tikzstyle{amap adj}=[fill=white, draw=black, shape=SEbox, tikzit category=asymmetric, tikzit fill=red, tikzit shape=rectangle]
\tikzstyle{amap trans}=[fill=white, draw=black, shape=SWbox, tikzit category=asymmetric, tikzit fill=orange, tikzit shape=rectangle]
\tikzstyle{astate}=[fill=white, draw=black, shape=NEtriangle, tikzit category=asymmetric, tikzit shape=circle, tikzit fill=yellow]
\tikzstyle{astate conj}=[fill=white, draw=black, shape=NWtriangle, tikzit category=asymmetric, tikzit shape=circle, tikzit fill=green]
\tikzstyle{astate adj}=[fill=white, draw=black, shape=SEtriangle, tikzit category=asymmetric, tikzit shape=circle, tikzit fill=red]
\tikzstyle{astate trans}=[fill=white, draw=black, shape=SWtriangle, tikzit category=asymmetric, tikzit shape=circle, tikzit fill=orange]
\tikzstyle{white dot}=[inner sep=0mm, minimum size=2mm, shape=circle, draw=black, fill={rgb,255: red,250; green,250; blue,250}]
\tikzstyle{white phase dot}=[minimum size=5mm, font={\footnotesize\boldmath}, shape=rectangle, rounded corners=2mm, inner sep=0.2mm, outer sep=-2mm, scale=0.8, tikzit shape=circle, draw=black, fill={rgb,255: red,250; green,250; blue,250}, tikzit draw=blue]
\tikzstyle{hbox}=[shape=rectangle, text height=2mm, fill={rgb,255: red,255; green,235; blue,61}, draw=black, minimum height=2mm, minimum width=2mm, font={\small}, tikzit category=zh, inner sep=0pt, rounded corners=0.5mm]
\tikzstyle{Z dot (zh)}=[inner sep=0mm, minimum size=2mm, shape=circle, draw=black, fill={rgb,255: red,250; green,250; blue,250}, tikzit category=zh]
\tikzstyle{X dot (zh)}=[Z dot, shape=circle, draw=black, fill={rgb,255: red,193; green,193; blue,193}, tikzit category=zh]
\tikzstyle{triangle}=[fill={rgb,255: red,255; green,136; blue,136}, draw=black, shape=isosceles triangle, isosceles triangle apex angle=60, minimum size=2.5mm, inner sep=0mm]
\tikzstyle{labelled hbox}=[shape=rectangle, text height=1.75ex, text depth=0.5ex, fill={rgb,255: red,255; green,235; blue,61}, draw=black, minimum height=3mm, minimum width=4mm, font={\small}, tikzit category=zh, inner sep=1.3pt, rounded corners=0.5mm]
\tikzstyle{Z phase dot (zh)}=[Z phase dot, tikzit shape=circle, tikzit draw=blue, fill={rgb,255: red,250; green,250; blue,250}, font={\footnotesize\boldmath}, tikzit category=zh]
\tikzstyle{X phase dot (zh)}=[Z phase dot, tikzit shape=circle, tikzit draw=blue, fill={rgb,255: red,193; green,193; blue,193}, font={\footnotesize\boldmath}, tikzit category=zh]
\tikzstyle{W node}=[fill=black, draw=black, shape=regular polygon, regular polygon sides=3, minimum size=2mm]
\tikzstyle{Z dot (zw)}=[fill=white, draw=black, shape=circle, minimum width=1.2mm, inner sep=0pt]
\tikzstyle{Z phase dot XL}=[Z phase dot, fill={rgb,255: red,250; green,250; blue,250}, draw=black, shape=circle, tikzit draw={rgb,255: red,191; green,0; blue,64}, tikzit shape=circle, font={\large\boldmath}, inner sep=0.0mm]
\tikzstyle{hadamard edge}=[-, dashed, dash pattern=on 2pt off 0.5pt, thick, draw={rgb,255: red,68; green,136; blue,255}]
\tikzstyle{box edge}=[-, dashed, dash pattern=on 2pt off 0.5pt, thick, draw={rgb,255: red,203; green,192; blue,225}]
\tikzstyle{brace edge}=[-, tikzit draw=blue, decorate, decoration={brace,amplitude=1mm,raise=-1mm}]
\tikzstyle{diredge}=[->, thick]
\tikzstyle{double edge}=[-, double, shorten <=-1mm, shorten >=-1mm, double distance=2pt]
\tikzstyle{gray edge}=[-, {gray!60!white}]
\tikzstyle{pointer edge}=[->, very thick, gray]
\tikzstyle{boldedge}=[-, line width=1.0pt, shorten <=-0.17mm, shorten >=-0.17mm]
\tikzstyle{bidir edge}=[<->, very thick, draw={rgb,255: red,191; green,191; blue,191}]
\tikzstyle{purple edge}=[->, thick, draw={rgb,255: red,225; green,117; blue,216}]
\tikzstyle{green edge}=[->, thick, draw={rgb,255: red,167; green,231; blue,137}]
\tikzstyle{orange edge}=[->, thick, draw={rgb,255: red,245; green,170; blue,63}]
\tikzstyle{any edge}=[->, thick, draw=cyan]
\tikzstyle{red edge}=[->, thick, draw={rgb,255: red,255; green,136; blue,136}]
\tikzstyle{bidiredge}=[<->, thick]
\tikzstyle{dashed diredge}=[->, dashed, dash pattern=on 1pt off 0.5pt]
\tikzstyle{bidashed diredge}=[<->, dashed, dash pattern=on 1pt off 0.5pt]
\tikzstyle{blue edge}=[-, draw={rgb,255: red,28; green,115; blue,237}]
\theoremstyle{definition}
\newtheorem{definition}{Definition}
\newtheorem{theorem}{Theorem}
\newtheorem{lemma}{Lemma}
\newtheorem{remark}{Remark}
\newcommand{\existinf}{\mathop{\exists}\limits^{\infty}}
\newcommand{\poly}{\mathrm{poly}}
\newcommand{\pair}[1]{{\langle #1\rangle}}
\newcommand{\sat}{\texttt{SAT}\xspace}
\newcommand{\factoring}{\texttt{FACTORING}\xspace}
\newcommand{\fac}{\texttt{FAC}\xspace}
\newcommand{\many}{\leq_m^p}
\begin{document}

\title{Formal Framework for Quantum Advantage}
\author{Harry Buhrman, Niklas Galke, Konstantinos Meichanetzidis}

\affiliation{Quantinuum, Partnership House, Carlisle Place, London SW1P 1BX, United Kingdom}

\date{\today}

\begin{abstract}
Motivated by notions of quantum heuristics and by average-case rather than worst-case algorithmic analysis, we define quantum computational advantage in terms of {\em individual} problem instances. 
Inspired by the classical notions of Kolmogorov complexity and instance complexity, we define their quantum versions. 
This allows us to define {\em queasy} instances of computational problems, like e.g. Satisfiability and Factoring, as those whose quantum instance complexity is significantly smaller than their classical instance complexity. These instances indicate quantum advantage: they are easy to solve on a quantum computer, but classical algorithms struggle (they feel queasy).
Via a reduction from Factoring, we prove the existence of queasy Satisfiability instances; specifically, these instances are maximally queasy (under reasonable complexity-theoretic assumptions).
Further, we show that there is exponential algorithmic utility in the queasiness of a quantum algorithm.
This formal framework serves as a beacon that guides the hunt for quantum advantage in practice, and moreover, because its focus lies on single instances, it can lead to new ways of designing quantum algorithms.
\end{abstract}

\maketitle

While algorithms like Shor's for integer factoring \cite{Shor_1997} or the AJL algorithm for evaluating knot invariants \cite{aharonov2006polynomialquantumalgorithmapproximating} provide a clear roadmap to quantum advantage for specific problems, given reasonable theoretical assumptions and considering the classical state of the art, for many other hard problems, such a roadmap is not crisply known.
Even in the domain of quantum simulation, where quantum advantage is expected for strongly correlated many-body systems, since in general classical algorithms do not suffice, in practice, classical approximate methods can exhibit remarkable performance.
In short, the landscape of quantum advantage is vast and diverse \cite{huang2025vastworldquantumadvantage}.

This motivates a shift in focus from worst-case analysis of entire problem classes to the practical, instance-by-instance hunt for quantum advantage, and so, the notion of quantum heuristics (quristics) becomes important. Experimental, trial-and-error approaches to discovering effective quantum algorithms, much like how heuristics are used in classical computing. This work aims to establish the theoretical foundation for the age of quristics, which becomes increasingly relevant as quantum computers scale and practical applications are considered.

To do this, we introduce a quantum version of instance complexity, a concept built upon the foundations of Kolmogorov complexity. Algorithmic Information Theory, pioneered by Kolmogorov, Chaitin, and Solomonoff, defines the complexity of an individual object (like a string) as the length of the shortest program that can produce it \cite{Li2008}. Instance complexity, later developed by Orponen et al. \cite{orponen1994instance}, extends this idea to problem \emph{instances}, measuring the size of the smallest program that can correctly solve a specific instance of a larger problem.

We adapt these ideas to the quantum realm. Crucially, we use time-bounded complexity measures, which, unlike their plain counterparts, are \emph{computable} and directly relevant to practical computation where resources are finite. This choice allows us to formally define what makes a problem instance quantum-easy, or \emph{queasy}: one whose quantum instance complexity is significantly lower than its classical counterpart. By proving the existence of such queasy instances within the canonical NP-complete problem of Boolean satisfiability (\sat), we demonstrate that the search for quantum advantage can be rigorously defined and guided. Though our definitions are abstract, they serve as a north star for the practical search for quantum advantage, providing a formal language to characterise and identify specific instances as promising.


{\bf Queasy Instances --}
A paradigmatic example of a problem class that contains queasy instances is \texttt{FACTORING}.
Shor's algorithm provides a $O(n^3)$ solution to obtaining the prime factors of a given $n$-bit integer, whereas the best classical algorithm, general number field sieve, runs in subexponential time $\Theta( \exp( n^{1/3} (\log n)^{2/3} ) )$ \cite{GNFS_Book_1993}.
Note that this is a problem that is also efficiently verifiable classically.
This is not believed to be the case for most problems in BQP.

There exist problems for which it is not expected that quantum computers will provide an advantage, such as \texttt{SAT} or Local Hamiltonian (Ground state finding, which is QMA-hard, the quantum analogue of NP), as all algorithms, classical and quantum, are expected to be `bad'.
And for classically `easy' problem classes, most famously P, again, we do not expect quantum computers to provide a speedup, as classical algorithms are already `good'.
But as long as we believe P$\subsetneq$BQP and NP$\neq$BQP, we expect queasy instances to exist inside BQP.

Note: throughout this work, we use $\eqsim$, $\gtrsim$, and $\lesssim$ for equalities and inequalities that hold up to a constant.


Kolmogorov complexity was introduced as a measure of the randomness of strings, also known as Algorithmic Information \cite{Li2008}. It presents an alternative point of view to the information-theoretic Shannon Entropy \cite{shannon}, as it does not assume a probability distribution over strings, but regards an \emph{individual} given string.

The `plain' Kolmogorov complexity $C(x) = |P|$ of an $n$-bit string $x$ is the size of the shortest program $P$ that runs on a universal Turing machine and prints the string $x$.
It is independent (up to a constant) of the choice of universal Turing machine,
$1 \leq C(x) \lesssim n $. Specifically, for simple strings like all zeroes we have
$C(0^n) \leq \log n$, there are also complex, or \emph{incompressible}, strings. Finally, $C$ is uncomputable, as the search for the smallest program reduces to the Halting problem.

Since we are interested in the runtime of algorithms, it is useful to define the \emph{time-bounded} version of Kolmogorov complexity,
which, importantly, is \emph{computable}.

\begin{definition} \label{def:KC}
The time-bounded Kolmogorov complexity $C^t(x) = |P|$ of an $n$-bit string $x$ is the size of the shortest program $P$ that runs in time $t(n)$ and returns $x$.
\end{definition}

We now define the \emph{quantum version} of $C^t$, taking into account that quantum computation is a \emph{probabilistic} model of computation.
In this context, we consider a classical program $P$ that outputs a quantum circuit $U$, ie the \emph{classical description} of a quantum program which produces the given string.

\begin{definition} \label{def:QKC}
    The time-bounded Quantum Kolmogorov complexity $QC^{t,\varepsilon}(x) = |P_U|$ of an $n$-bit string $x$ is the size of the shortest program $P_U$ that runs in time $t(n)$ and generates a quantum circuit $U$ that outputs $x$ with probability $\varepsilon>0$.
\end{definition}

The quantum and classical complexities are related as follows.
\begin{remark}
$C(x) \lesssim QC^{t',\varepsilon}(x) \lesssim C^t(x)$, where $t'=t\log t$.
\end{remark}
This means that time-bounding the Kolmogorov complexity can only increase it, as we are restricting the pool of programs allowed.
Furthermore, the quantum time-bounded complexity is smaller than the classical, because any classical computation is also a quantum computation.
However, note that this incurs a time overhead from converting a Turning machine to a reversible circuit \cite{Nielsen_Chuang_2010}.





\cref{def:QKC} can be made independent of $\varepsilon$:


\begin{theorem} \label{thm:noepsilon}
    For $\varepsilon>0$,
    $QC^{nt, 1-r^{-n}}(x)\lesssim QC^{t,\varepsilon}(x)$
    for some $r=r(\varepsilon) > 0$ and large enough $n\in\mathbb{N}$. (Proof in Appendix)
\end{theorem}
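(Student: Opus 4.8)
The plan is to prove the statement by probability amplification: given the shortest program $P_U$ witnessing $QC^{t,\varepsilon}(x)$, I would wrap it in a constant-size routine that runs the generated circuit many times and returns the empirical mode of the measured outcomes. Concretely, let $s = QC^{t,\varepsilon}(x)$ and let $U$ be the circuit produced by $P_U$, so that a single execution-and-measurement of $U$ yields $x$ with probability $\varepsilon$. I would define a new program $P'$ that first reconstructs $U$ by running $P_U$, then performs $k = c\,n$ independent runs of $U$ for a constant $c = c(\varepsilon)$ to be fixed, tallies the outcomes, and outputs whichever $n$-bit string occurred most often.

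First I would argue the size bound. The routine ``run the enclosed circuit $k=cn$ times and output the most frequent outcome'' is a fixed piece of code whose length does not depend on $x$; the value $n$ is available to $P'$ since $U$ produces $n$-bit strings, and $c$ is hardcoded and depends only on $\varepsilon$. Hence $|P'| \le |P_U| + O(1) = s + O(1)$, which is exactly the $\lesssim$ in the statement. Next I would handle the time bound: reconstructing $U$ costs time $t(n)$, and the $k = O(n)$ repetitions multiply the per-run cost by $O(n)$, so the total runtime sits within the $nt$ budget of the target complexity $QC^{nt,\,\cdot}$.

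The core of the argument is the correctness analysis via a Chernoff-type concentration bound. Writing $N_y$ for the number of times outcome $y$ appears among the $k$ runs, the normalized count $N_x/k$ concentrates around $\varepsilon$, while the count of every competing string concentrates around its (smaller) probability. Provided $x$ is the dominant outcome of $U$, the probability that some $y \ne x$ beats $x$ in the tally is at most $e^{-\alpha k}$ for a constant $\alpha = \alpha(\varepsilon) > 0$ controlled by the gap between $\varepsilon$ and the next-largest output probability; choosing $k = cn$ then gives failure probability $\le e^{-\alpha c n} = r^{-n}$ with $r = e^{\alpha c} = r(\varepsilon) > 1$, valid for large enough $n$. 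The union bound over the at-most-$2^n$ competing strings is absorbed by taking $c$ large enough that $\alpha k$ dominates $n\log 2$.

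The main obstacle is exactly this dominance requirement. Majority voting recovers $x$ only when $x$ is the strict mode of $U$'s output distribution; if $\varepsilon \le 1/2$ and some other string is produced more often, no amount of repetition can single out $x$ without already knowing it. I would therefore either invoke the bounded-error convention that the designated output $x$ is the most probable measurement result (so that $\varepsilon$ lower-bounds the winning margin), or first apply a standard amplification step to push the relevant success probability above $1/2$ before the Chernoff estimate. Making this gap explicit, and checking that the resulting $r(\varepsilon)$ is consistent with the $2^n$-way union bound for the stated ``large enough $n$'', is the delicate part of the write-up.
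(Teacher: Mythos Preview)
Your proposal correctly identifies the central obstacle but does not close it. Definition~\ref{def:QKC} only requires the target string to appear with probability $\varepsilon$, not to be the mode, so you cannot simply invoke that as a convention; and there is no ``standard amplification step'' for string generation, since majority voting presupposes knowing which outcome to vote for---that is precisely the thing you are trying to produce. Worse, even granting that $x$ is the mode, your Chernoff rate $\alpha$ is governed by the gap between $\varepsilon$ and the next-largest probability, and that gap is \emph{not} a function of $\varepsilon$ alone: a legitimate witness circuit may have a runner-up at probability $\varepsilon - 2^{-n}$, in which case $O(n)$ repetitions cannot reliably separate them and no $r = r(\varepsilon)$ of the claimed form exists.

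The paper supplies the missing idea with a short combinatorial lemma: any subnormalized list $p_1 \ge p_2 \ge \cdots$ must contain an index $j$ with $p_j - p_{j+1} \ge p_1^2/(2+p_1)$, and applying this to the tail starting at the target's rank guarantees a gap of size at least $\varepsilon^2/(2+\varepsilon)$ at some position $j = O_\varepsilon(1)$ with the target among the top $j$. The amplified program then hardcodes the integers $j$ and $a$ (the lexicographic rank of $x$ among the top-$j$ strings), runs $n$ copies of $U$, forms the empirical top-$j$ list, and outputs its $a$-th element. The $\varepsilon$-only gap now yields a genuine Chernoff rate depending solely on $\varepsilon$, and $j,a = O_\varepsilon(1)$ keeps the description overhead constant. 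This hardcoded-position trick, not mode voting, is what makes the argument go through.
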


So, we can remove the dependence on the probability $\epsilon$ and simply write $QC^t$ for the time-bounded quantum Kolmogorov complexity.


Furthermore, we shall make use of a weaker quantity than the Kolmogorov complexity, namely the \emph{Distinguishing Complexity}, $CD(x)$. This is the complexity of identifying, or \emph{recognising}, a string $x$.
It is the length of the shortest program that accepts \emph{only} string $x$, but does not necessarily generate $x$.
And, more specifically, we use its time-bounded version.

\begin{definition}
    The time-bounded Distinguishing Complexity $CD^t(x)$ is the size of the shortest program $P : P(x)=1, P(y\neq x) = 0$ that runs in time $t$.
\end{definition}

The quantum version of time-bounded Distinguishing complexity is naturally defined as follows.

\begin{definition}
    The time-bounded Quantum Distinguishing Complexity $QCD^t(x)=|P_U|$ is the size of the shortest program $P_U$ that runs in time $t$ and generates a quantum circuit $U$ that accepts with probability $> 1/2+\epsilon$, for some $\epsilon>0$, and for $y\neq x$ it accepts with probability $<1/2-\epsilon$.
\end{definition}


Having defined the complexity of a string, both classical and quantum, we turn to the definition of the complexity of a particular \emph{instance} of a problem. This was formally done in~\cite{orponen1994instance}.

Consider a computational \emph{problem}, or \emph{language}, $\texttt{L}\subseteq\{ 0,1 \}^*$,
where problem instances $x$ are represented as $n$-bit strings.
As we are interested in time-bounded notions of complexity, we allow algorithms, or programs, to return "I don't know", denoted $\bot$, as a third option to $\{0,1\}$, given an instance as input.
In particular, we are interested in well-behaved programs with respect to the given problem.
This means that given an instance $x$, a program $P$ must be correct when it does not return "I don't know", in other words, it must be consistent with the language.

\begin{definition}
A program $P$ running in time $t(n)$ is \texttt{L}-consistent if $P(x)\in\{0,1,\bot\}$, $\forall x$, and if $P(x)\neq \bot$ then $P(x)=\chi_\texttt{L}(x)$.
\end{definition}

Here, $\chi_\texttt{L}(x)$ is the characteristic function, $\chi_\texttt{L}(x)=1$ if $x\in\texttt{L}$ and $\chi_\texttt{L}(x)=0$ if $x\notin\texttt{L}$.
A trivial example of a consistent problem (with any \texttt{L} problem) is the program that always returns $\bot$ independently of input instance.

We now define the time-bounded \emph{Instance complexity} with respect to a given computational problem.

\begin{definition}
    The time-bounded instance complexity $ic^t (x:L) = |P|$ of a problem instance $x$ is the size of the shortest \texttt{L}-consistent program $P : P(x)\neq \bot$ that runs in time $t(n)$.
\end{definition}

Technically, the time-bounded instance complexity is not decidable because of the consistency property. However, approximating it within an additive $\log(n)$ term makes it decidable. The time-bounded Kolmogorov complexity upper bounds the Instance complexity.

\begin{remark} \label{rem:icleqC}
The time-bounded instance complexity of $x$ with respect to any problem \texttt{L} is upper bounded by the time-bounded Kolmogorov complexity of the instance, 
$ic^{t'}(x:L) \leq C^t(x)$, where $t'=t\log t + n$. (Proof in Appendix)
\end{remark}





In general, identifying a string is simpler than generating it, so the Distinguishing Complexity is upper bounded by the Kolmogorov Complexity.
Therefore, it constitutes a stronger upper bound to the instance complexity.

\begin{remark}\label{rem:icleqCD}
    $ic^t(x:\texttt{L}) \lesssim CD^t(x) \lesssim C^t(x), \forall x,\texttt{L}$. (Proof in Appendix)
\end{remark}

Among instances in the problem class, we can identify the \emph{hard} and \emph{easy} instances, in analogy with \emph{incompressible} and \emph{compressible} strings.

\begin{definition} Instance $x$ is hard iff $ic^t(x:\texttt{L}) \eqsim CD^t(x)$.    
\end{definition}


\begin{definition} Instance $x$ is easy iff $ic^t(x:\texttt{L}) \ll CD^t(x)$.    
\end{definition}

And, naturally, we can define the corresponding quantum versions of consistency with a language and instance complexity.

\begin{definition}


A program $P_U$ is quantum $\epsilon$-\texttt{L}-consistent if it generates a circuit $U$ that uses a dedicated qubit $q_0$ to signal its confidence. If the outcome of $q_0$ indicates "I know", ie $p(q_0=1)>1/2+\epsilon$, then a second qubit $q_1$ must yield the correct answer with high probability, ie $p(q_1=\chi_\texttt{L}(x))>1/2+\epsilon$. If $q_0$ indicates "I don't know", ie $p(q_0=0)>1/2+\epsilon$, the outcome of $q_1$ is disregarded.
\end{definition}

\begin{definition}
    The time-bounded Quantum Instance Complexity $Qic^t (x:\texttt{L})$ of an $n$-bit instance $x$ is the size of the shortest quantum-$\epsilon$\texttt{-L}-consistent program $P_U$ that runs in time $t(n)$ and  decides $x$, for some\footnote{The choice of $\epsilon>0$ is independent of the complexity measure as simple repeating the program and taking the majority value will increase $\epsilon$.} $\epsilon > 0$.
\end{definition}

Similarly to the classical case, we have that quantum instance complexity is bounded by quantum Kolmogorov complexity.

\begin{remark} \label{rem:qicleqqcd}
$Qic^{t'}(x:\texttt{L})\leq QCD^t(x)$. (Proof similar to \cref{rem:icleqCD})
\end{remark}


If a problem \texttt{L} is in BQP, then for any instance $x$ its quantum instance complexity $Qic^\mathrm{poly}(x:L)$ is upper-bounded by a constant (the size of the program implementing the BQP algorithm). If \texttt{L} $\notin$ P, then its classical instance complexity, $ic^\poly(x:L)$, cannot be upper-bounded by a constant (otherwise we could find a polytime program implying \texttt{L} $\in$ P). Therefore, $BQP \neq P$ directly implies the existence of instances where $ic^{>\poly}(x:\texttt{L})$ is large and $Qic^\poly(x:\texttt{L})$ is small — ie the existence of \emph{queasiness}. Including a brief proof of this would be a strong start to the section.

The definition of classically hard and queasy instances regards the relation between quantum and classical instance complexities. These are instances for which a quantum computational advantage is expected.
The theoretical framework adopted in this work defines quantum advantage in an instance-based fashion, having in mind quristics.

Firstly, we have that the classical instance complexity bounds the quantum instance complexity, similarly to how classical Kolmogorov complexity bounds quantum Kolmogorov complexity.

\begin{remark}
$Qic^{t'}(x:\texttt{L}) \leq ic^t(x:\texttt{L})$, where $t' = t \log{t}$.    
\end{remark}

A queasy instance is now defined by the difference between time-bounded classical and quantum instance complexities.

\begin{definition}
    An instance $x$ is \emph{Queasy} with respect to problem \texttt{L} iff $Qic^{t'}(x:\texttt{L}) \lesssim ic^t(x:\texttt{L})$, with $t'=\poly(n)$ and $t<\exp(n)$, where $n=|x|$.
\end{definition}

Note, we have restricted the quantum time bound to polynomial, whereas we allow a sub exponential classical time bound.
We do not allow an exponential classical time bound, since then the classical algorithm may simply be a simulation of the quantum algorithm.
This definition also naturally incorporates the cases where quantum algorithms and in general quantum time-evolutions happen to be efficiently simulable, for example Clifford circuits, free-fermionic dynamics, or other approximate methods, for example, those based on tensor networks, that take advantage of the entanglement structure or the presence of noise.
Differentiating between quantum and classical time is important, as in practice it may be the case that an exponentially scaling classical algorithm may have a smaller time to solution than a polynomially scaling quantum algorithm, for a given instance of a given size.

We can now define a measure the \emph{queasiness} of an instance, ie how much easier it seems to a quantum algorithm versus a classical one, or how `queasy' it makes the classical algorithm `feel'.

\begin{definition}
The queasiness of an instance $x$ is the difference between the classical and quantum instance complexities $\Delta ic^{t,t'} (x:\texttt{L}) := ic^t(x:\texttt{L}) - Qic^{t'}(x:\texttt{L})$, with respect to the classical time-bound $t$ and the quantum time-bound $t'$.
\end{definition}

Note that the search for quantum advantage entails identifying instances exhibiting \emph{maximal queasiness}, ie when the bound $\Delta ic^{t,t'}(x:\texttt{L}) \lesssim CD^t(x)$ is saturated.

{\bf Algorithmic Utility --}
Now, since $ic^t(x:\texttt{L}) \lesssim CD^t(x)$, consider the case when $ic^t(x:\texttt{L}) \ll CD^t(x)$, where $P$ is the shortest L-consistent program that decides whether $x\in \texttt{L}$ in time $t$,
and consider the set $S$ of problem instances that are also solved with the same program $P$.
We can show the following:

\begin{theorem}\label{thm:usefulness}
   Assume $t' = t +p$ for some polynomial $p$. For any instance $x$ of length $n$ and problem \texttt{L}, assume $CD^{t'}(x) - ic^t(x:\texttt{L}) = d$, where $P(x)=\chi_\texttt{L}(x)$ and  $|P|=ic^t(x:\texttt{L})$. Consider the set $S=\{y|P(y)\neq \bot \}$. Then, $|S|\geq 2^{(d/2) -c} - O(n^{c'})$, for some constants $c$ and $c'$. (Proof in Appendix)
\end{theorem}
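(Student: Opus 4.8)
The plan is to convert the short $\texttt{L}$-consistent program $P$ (which witnesses $ic^t(x:\texttt{L})=|P|$) into a short \emph{distinguishing} program for $x$, and then read off the lower bound on $|S|$ from the gap $d$. The conceptual heart of the argument, and the main obstacle, is the time budget: since $t'=t+p$ with $p$ only polynomial, the distinguishing program cannot afford to enumerate $S$, which would cost on the order of $2^n$ invocations of $P$. This rules out the naive ``index of $x$ within $S$'' encoding and forces the use of a fingerprint that is checkable in polynomial time on each input.

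Concretely, I would build a program $Q$ that, on input $y$, first rejects if $|y|\neq n$, then runs $P(y)$ and rejects if $P(y)=\bot$, and only then accepts iff $y\equiv r \pmod{p}$, where the prime $p$ and the residue $r=x\bmod p$ are hardwired constants of $Q$. By construction $Q$ accepts $x$ (since $P(x)\neq\bot$) and rejects every $y\notin S$ and every $y$ of the wrong length; it remains to choose $p$ so that $Q$ also rejects each $y\in S_n:=S\cap\{0,1\}^n$ with $y\neq x$. For any such $y$ the integer $x-y$ is nonzero and smaller than $2^n$, hence has at most $n$ distinct prime factors, so at most $n(|S_n|-1)$ primes fail to separate $x$ from some element of $S_n$; a separating $p$ therefore exists among the first $n|S_n|+1$ primes, giving $\log p=\log|S_n|+O(\log n)$. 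Crucially, the search for $p$ happens at ``compile time'' (inside the fixed description of $Q$), so at runtime $Q$ only pays $t(n)$ for $P(y)$ plus $\poly(n)$ for the modular check, i.e.\ total time $t(n)+\poly(n)\le t'$, making $Q$ an admissible $t'$-time distinguishing program.

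The cost accounting then produces the factor $1/2$. The description of $Q$ consists of $P$, a constant wrapper, and the two hardwired numbers $p$ and $r$, each of bit-length $\log p$. Hence $CD^{t'}(x)\le |P|+2\log p+O(1)=ic^t(x:\texttt{L})+2\log|S_n|+O(\log n)$, so $d=CD^{t'}(x)-ic^t(x:\texttt{L})\le 2\log|S_n|+O(\log n)$, and therefore $|S|\ge |S_n|\ge 2^{d/2-O(\log n)}$, which is exactly the claimed bound once the $O(\log n)$ additive slack is absorbed into the constant $c$ and the $\poly(n)$ correction $O(n^{c'})$. The factor $2$ is essential rather than an artifact: we must store both the modulus $p$ \emph{and} the residue $r$, each of which requires $\approx\log|S|$ bits, so the instance complexity can be undercut by the distinguishing complexity by at most roughly $2\log|S|$.

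I expect the genuinely delicate point to be the interaction between the time bound and the encoding, namely arguing rigorously that a polynomial-time-checkable certificate of length $\approx 2\log|S|$ is enough to single out $x$ among all instances that $P$ decides, and that moving the prime search into the program's description keeps $Q$ within the $t'$ budget. The remaining steps — locating the separating prime, uniformly handling inputs of length $\neq n$, and tracking the lower-order $\poly(n)$ terms to land precisely on the stated $2^{(d/2)-c}-O(n^{c'})$ form — I regard as routine bookkeeping.
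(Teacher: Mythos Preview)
Your proof is correct and follows essentially the same route as the paper. The paper invokes a lemma of Buhrman and Fortnow (stating that $CD^{p,\,A^{\leq n}}(x)\le 2\log\|A^{\leq n}\|+O(\log n)$ for every $x\in A^{\leq n}$) as a black box and then replaces the oracle for $A=S$ by the program $P$, adding $|P|$ to the description length; your prime-fingerprint construction is exactly the standard proof of that lemma, so you have simply unpacked the citation rather than quoting it.
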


The cardinality of $S$ scales with $2^{d/2}$, implying that $P$ is useful for \emph{exponentially} many other instances.
\cref{thm:usefulness} is also true in the quantum case with exactly the same proof for the time-bounded Quantum Distinguishing complexity.
This means that the smaller the instance complexity, assuming the distinguishing complexity is high, the higher the \emph{algorithmic utility}, in the sense that the quantum program also solves \emph{exponentially more} other instances rather than returning $\bot$.
Especially, if $\Delta ic^{t,t'}(x:L) > 0$, then this would mean that a quantum algorithm that decides instance $x$ with respect to language $\texttt{L}$ would be more useful than any classical algorithm with the same runtime $t$, not only in terms of quantum advantage for the instance $x$ but also in the sense that it could decide \emph{exponentially} in $\Delta ic^{t,t'}(x:L)$ more instances than any classical algorithm could (without necessarily being the shortest possible program for them).
There is one notable exception, namely when the difference in quantum and classical instance complexities ($Qic^{t'}(x:\texttt{L}) < ic^t(x:\texttt{L})$) is due to the difference in quantum and classical distinguishing complexities ($QCD^{t'}(x)< CD^{t}(x)$).
In other words, the above intuition holds when the quantum and classical distinguishing complexities for this instance are close ($QCD^{t'}(x) \simeq CD^{t}(x)$). And note that the distinguishing complexities are independent of the problem (or language) at hand.


{\bf Queasy SAT --} We now turn our attention to \sat the well known NP-complete problem and examine whether it contains queasy instances. We show that this is indeed the case: there exist \emph{infinitely many} formulae $\phi\in \sat$, that are close to being \emph{maximally queasy}.
This shows that the definitions we introduced are useful for identifying instances with quantum advantage.
In fact, we identify a meaningful division for \sat into three types of instances: the classically easy ones, the ones hard for both classical and quantum, and the queasy ones. Let's first turn our attention to the queasy instances for \sat:


\begin{theorem}\label{thm:infqueasy}
    If \factoring on $n$-bit instances requires $\Omega(2^{n^\epsilon})$ time\footnote{The best know algorithm requires $\epsilon \geq 1/3$}, for some $\epsilon > 0$, then $\existinf \phi \in \texttt{SAT} : \Delta ic^{t, n^3}(\phi:\texttt{SAT}) \geq n^\delta$, for some $\delta > 0$ and $t(n) \in O(2^{n^\delta})$.
\end{theorem}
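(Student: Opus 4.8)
\emph{Proof strategy.} The plan is to transfer queasiness from \factoring to \sat through the polynomial-time many-one reduction guaranteed by NP-completeness, so that the quantum advantage of Shor's algorithm survives the translation while classical hardness is inherited from the assumed factoring lower bound. The first step is to choose the right factoring-based decision language. Taking $\phi$ to encode mere compositeness will not do, since compositeness is decidable in polynomial time and the resulting instances are classically easy; instead I would fix the factoring-equivalent language $\mathcal{L}_{\mathrm{fac}}$ of pairs $\langle N,k\rangle$ such that $N$ has a prime factor at most $k$, whose decider yields the least prime factor of $N$ by binary search over $k$ in $O(n)$ queries. Applying the Cook--Levin reduction to the NP-verifier for $\mathcal{L}_{\mathrm{fac}}$ produces, for each hard instance $N$, a family of \sat formulae $\phi_{N,k}$ of size $\poly(n)$.

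Second, I would establish the quantum upper bound $Qic^{n^3}(\phi_{N,k}:\sat)=O(1)$. A single fixed classical program $P_U$ parses $\phi_{N,k}$, decodes $\langle N,k\rangle$ from its structure, emits a Shor circuit that factors $N$ and decides the query in time $O(n^3)$, then writes the verdict to $q_1$ while asserting confidence on $q_0$. On formulae not of the reduction's form the program simply signals ``don't know'' ($q_0=0$), so it is quantum-$\epsilon$-\sat-consistent, and its description length is a constant independent of $N$.

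Third --- and this is the heart of the argument --- I would prove the classical lower bound $ic^{t}(\phi_{N,k}:\sat)\gtrsim n^\delta$ for $t(n)\in O(2^{n^\delta})$ and infinitely many $N$, by contradiction. If instead all but finitely many reduced instances had $ic^{t}(\phi_{N,k}:\sat)\le s:=O(n^\delta)$, then for each fixed hard $N$ every binary-search query is decided by some \sat-consistent program of size $\le s$ running in time $t$. Enumerating all $2^{O(s)}$ programs of that size and simulating each for time $t$, then driving the binary search and verifying the resulting candidate factor against $N$ (legitimate because \factoring is efficiently verifiable), recovers a nontrivial factor of $N$ in total time $2^{O(n^\delta)}\cdot t\cdot\poly(n)=2^{O(n^\delta)}$. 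Choosing $\delta<\epsilon$ makes this $\ll 2^{n^\epsilon}$, contradicting the assumed $\Omega(2^{n^\epsilon})$ lower bound on \factoring. Since the hard family $\{N\}$ is infinite, infinitely many $\phi$ survive, and for each, $Qic^{n^3}=O(1)$ gives queasiness $\Delta ic^{t,n^3}(\phi:\sat)\ge n^\delta-O(1)$, which is $\ge n^{\delta'}$ for any $\delta'<\delta$ and large $n$.

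The step I expect to fight hardest is the last one, specifically the move from ``a short consistent program exists'' to ``we can compute the factor.'' Enumerate-and-simulate only guarantees that among the halting short programs at least one outputs the correct bit, but \emph{inconsistent} programs may halt with misleading bits, and consistency is not directly checkable. The plan is to neutralise this by leaning on the verifiability of \factoring ($\mathrm{NP}\cap\mathrm{coNP}$): candidate factors produced along any branch are checked by trial division against $N$, so wrong guesses are discarded and only a genuine factorisation is accepted, while the (approximate) decidability of $ic^t$ keeps the search well-defined. Making the branching and the search-plus-simulation budget stay below $2^{n^\epsilon}$ requires exactly the bookkeeping $\delta<\epsilon$ indicated above. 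A final minor point is to guarantee a \emph{satisfiable} witness per $N$: one either reads $\phi\in\sat$ as ``$\phi$ is a \sat instance,'' or selects the threshold query $\langle N,p\rangle$ at the least prime factor $p$, which is a yes-instance and inherits the hardness.
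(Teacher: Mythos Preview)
Your overall architecture matches the paper's: obtain $Qic^{n^3}=O(1)$ via Shor plus an invertible Cook--Levin reduction, and force high classical $ic$ by contradiction through enumerating all short programs. But the step you correctly flag as the hardest is where the argument actually breaks. The contradiction hypothesis only gives, for each binary-search threshold $k$, \emph{some} short \sat-consistent program deciding $\phi_{N,k}$ --- in general a different one for each $k$. Fixing one program and letting it drive the whole binary search is therefore not justified (it may answer $\bot$ on queries it does not witness), while branching on the answer bit at each of the $O(n)$ search steps yields $2^{O(n)}$ leaves, not $2^{O(n^\delta)}$. End-verification by trial division cannot rescue an exponential candidate list, so your stated running time $2^{O(n^\delta)}\cdot t\cdot\poly(n)$ is unsupported.

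The paper closes exactly this gap with a different decision language and an extend-and-prune trick. It works with $\fac=\{\pair{x,a}: a \text{ is a prefix of the largest prime factor of } x\}$, whose key structural feature is that for fixed $x$ and fixed prefix length, \emph{exactly one} extension $b$ makes $\pair{x,ab}$ a yes-instance. This turns inconsistency into something locally detectable: any program in the pool GOOD that outputs $1$ on two distinct extensions is provably not \fac-consistent and is discarded. The survivors each vote for at most one extension, so the set POS of live prefix-candidates stays bounded by $|\text{GOOD}|\le 2^{n^\delta}$ after every round; one then pads POS with enough fresh bits to double it and repeats. After at most $n$ rounds the factor emerges, in total time $2^{O(n^\delta)}$, yielding the contradiction (the paper's Lemma~\ref{lemma:pruning}). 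Your threshold language $\{\pair{N,k}: N\text{ has a prime factor}\le k\}$ only gives monotonicity in $k$, which does not pin each consistent program to a single candidate in the same way, so even with the pruning idea in hand you would want to switch to the prefix formulation.
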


To prove \cref{thm:infqueasy}, we first show that its statement is true for a suitable version of \texttt{FACTORING}, which we call \fac.

\begin{definition}
$\texttt{FAC} = \{ \pair{x,a} :$ the largest prime factor of $x$ has $a$ as~prefix$\footnote{This means that if the $p$ is the largest prime factor of $x$, then $\pair{x,a}$ is in \texttt{FAC} if the binary representation of $p = ab$, that is, $p$ starts with the binary string $a$ concatenated with the binary string $b$ ($a$ is a prefix of $p$).}  \}$  $\in$ NP $\cap$ co-NP. 
\end{definition}

 The following lemma shows that the \texttt{FAC} problem contains \emph{infinitely many} queasy instances. 
 
\begin{lemma}\label{lemma:pruning}
 If \factoring on $n$-bit instances requires $\Omega(2^{n^\epsilon})$ time, for some $\epsilon > 0$, then $\existinf z \in \fac : \Delta ic^{t,n^3}(z:\fac) \geq n^\delta$, for  $\delta < \epsilon$ and $t(n) \in O(2^{n^\delta})$. (Proof in Appendix)
\end{lemma}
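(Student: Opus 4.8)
\emph{Proof proposal.} The plan is to bound the two terms of $\Delta ic^{t,n^3}(z:\fac)=ic^t(z:\fac)-Qic^{n^3}(z:\fac)$ separately, showing the quantum term is a universal constant while the classical term is forced above $n^\delta$ for infinitely many $z$. For the quantum upper bound I would invoke Shor's algorithm \cite{Shor_1997}: a \emph{fixed} program outputs the circuit that factors the first component $x$ of $z=\pair{x,a}$ in time $O(n^3)$, reads off the largest prime factor $p$ from the complete factorisation, and checks whether $a$ is a prefix of $p$; amplifying the success probability makes it quantum $\epsilon$-\fac-consistent while always signalling ``I know''. Since this program is independent of $z$, we get $Qic^{n^3}(z:\fac)\le c_0$ for a universal constant $c_0$, and it therefore suffices to prove $ic^t(z:\fac)\ge n^\delta$ for infinitely many $z\in\fac$.

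For the classical lower bound I would argue by contradiction. Suppose only finitely many $z\in\fac$ satisfy $ic^t(z:\fac)\ge n^\delta$, so that \emph{cofinitely many} \fac-instances admit an \fac-consistent program of size $<n^\delta$ running in time $t(n)=O(2^{n^\delta})$ that does not answer $\bot$. The goal is to turn this into a factoring algorithm running in time $2^{O(n^\delta)}$, which is $o(2^{n^\epsilon})$ because $\delta<\epsilon$, contradicting the assumed $\Omega(2^{n^\epsilon})$ bound for \factoring. The engine is the prefix self-reduction built into \fac: to recover the largest prime factor $p$ of $x$ one binary-searches over prefixes, querying $\pair{x,a0}$ and $\pair{x,a1}$ to fix each successive bit of $p$, then divides $p$ out and recurses, obtaining the full factorisation in $O(n^2)$ queries. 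Each query I would answer by enumerating all $<2^{n^\delta}$ programs of size $<n^\delta$ and running each for time $t$, for a cost of $2^{O(n^\delta)}$ per query and hence overall.

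The step I expect to be the main obstacle is \emph{extracting the correct answer} from this enumeration: the hypothesis only guarantees that \emph{some} small consistent program decides each query correctly, while the remaining small programs may be inconsistent and vote for the wrong bit, so a naive majority is unsound. Here I would exploit that \fac is in NP $\cap$ co-NP and, concretely, that factoring answers are efficiently \emph{checkable}: rather than trusting individual votes I let the programs guide the search to candidate factors and verify globally, accepting a candidate factorisation only when primality of each putative factor and the identity $\prod_i q_i^{f_i}=x$ both hold (both polynomial-time tests), and branching on conflicting votes otherwise. The delicate part is bounding this branching so that the verified search still finishes within the $2^{O(n^\delta)}$ budget; controlling it, together with the bookkeeping that $z=\pair{x,a}$ has length $n=\Theta(|x|)$ so the exponents line up, is where the real work lies. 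Combining the two bounds then yields $\Delta ic^{t,n^3}(z:\fac)\ge n^\delta-c_0\gtrsim n^\delta$ for the infinitely many hard instances, as claimed.
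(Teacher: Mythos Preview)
Your overall architecture matches the paper's: reduce to a classical lower bound on $ic^t(z:\fac)$ using the constant quantum upper bound from Shor, and derive the lower bound by contradiction via a sub-$2^{n^\epsilon}$ factoring procedure built from the pool of short programs. You also correctly isolate the crux: how to extract the true prefix when most of the $\lesssim 2^{n^\delta}$ short programs may be \fac-\emph{inconsistent} and vote adversarially.

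The gap is exactly where you flag it. Your proposed fix, ``branch on conflicting votes and verify the full factorisation at the leaves,'' does not control the search within the $2^{O(n^\delta)}$ budget. With one-bit extensions, a single inconsistent program can vote $1$ on both $\pair{x,a0}$ and $\pair{x,a1}$ at every level, so the set of surviving prefixes can double each step and reach $2^{\Theta(n)}$ leaves; verification at the end is then far too late. Nor can you sidestep this by tracing one program at a time, since the hypothesis only promises that \emph{some} short consistent program decides each individual query $\pair{x,a}$, not that a single program handles the whole path.

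The paper closes this with an \emph{extend-and-prune} argument that exploits consistency structurally rather than relying on end-of-path verification. Maintain a set $\text{GOOD}$ of candidate programs (initially all $m=2^{n^\delta}$ short ones) and a set $\text{POS}$ of candidate prefix extensions. At each round extend every surviving prefix by enough fresh bits that $|\text{POS}|\ge 2m$, then run every $P\in\text{GOOD}$ on every $\pair{x,ab}$ with $b\in\text{POS}$. A \fac-consistent $P$ can output $1$ on at most one $b$ (there is a unique correct prefix), so any $P$ saying $1$ twice is exposed as inconsistent and removed from $\text{GOOD}$; then discard every $b$ that no remaining program endorses. This forces $|\text{POS}|\le m$ after each round while never discarding the true prefix (some short consistent program endorses it by hypothesis). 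After at most $n$ rounds a full-length candidate appears, and the total work is $O(n\cdot m^2\cdot t(2n))=2^{O(n^\delta)}<2^{n^\epsilon}$, giving the contradiction. The idea you were missing is to look at \emph{multi-bit} extensions long enough to outnumber the programs, so that inconsistency becomes detectable round by round instead of only at the leaves.
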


Note that $\texttt{FAC} \in$ BQP and hence: $\forall z:  $ $Qic^{t}(z:\texttt{FAC})\leq O(1)$ for $t = n^3$. This means that there exists a constant-size description of a quantum algorithm that efficiently solves the problem, namely Shor's algorithm \cite{Shor_1997}. So it suffices to show that there exist infinitely many instances in \texttt{FAC} that have high classical instance complexity.

To proceed, we need the following:

\begin{lemma}\label{lem:easytoeasyhardtohard}
If A $\many$ B via a polynomial time computable function $f$ then for some polynomials $t$ and $t'$:
\begin{enumerate}
    \item $ic^{t}(x:\texttt{A}) \lesssim ic^{t'}(f(x):\texttt{B})$.
    \item $Qic^{t}(x:\texttt{A}) \gtrsim Qic^{t'}(f(x):\texttt{B})$ if in addition  $f$ is poly-time invertible and 1-1.
\end{enumerate}
\end{lemma}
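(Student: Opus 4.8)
The plan is to prove both inequalities with the same elementary device: \emph{hard-coding the reduction} $f$ (in part 1) or its inverse (in part 2) into a short consistent program. The key observation is that $f$ is a \emph{fixed} polynomial-time computable function, so its description has constant size; composing a witnessing program with $f$ therefore changes the program's length only by an additive $O(1)$ and its running time only by an additive polynomial. This is precisely what an inequality ``up to a constant'' ($\lesssim$, $\gtrsim$) and the phrase ``for some polynomials $t,t'$'' allow.

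For part 1, let $Q$ be a shortest $\texttt{B}$-consistent program with $Q(f(x))\neq\bot$ running in time $t'$, so $|Q| = ic^{t'}(f(x):\texttt{B})$. I would define $P := Q\circ f$: on input $y$, compute $f(y)$ in polynomial time and return $Q(f(y))$. Then I verify three things. First, $\texttt{A}$-consistency: if $P(y)\neq\bot$ then $Q(f(y))\neq\bot$, so $Q(f(y))=\chi_\texttt{B}(f(y))$ by $\texttt{B}$-consistency, and the reduction property $\chi_\texttt{B}(f(y))=\chi_\texttt{A}(y)$ gives $P(y)=\chi_\texttt{A}(y)$. Second, $P(x)=Q(f(x))\neq\bot$, so $P$ decides $x$. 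Third, $|P|\leq|Q|+O(1)$ and $P$ runs in time $t(n)=t'(\poly(n))+\poly(n)$, which is a polynomial whenever $t'$ is. Hence $ic^t(x:\texttt{A})\leq|P|\lesssim ic^{t'}(f(x):\texttt{B})$. No invertibility is needed because we compose with $f$ in the \emph{forward} direction.

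For part 2 the composition must run the other way, which is where injectivity and invertibility enter. Let $P_U$ witness $Qic^t(x:\texttt{A})$, i.e.\ it is quantum-$\epsilon$-$\texttt{A}$-consistent and generates a circuit deciding $x$. I would build a program $P'$ for the $\texttt{B}$-instance $f(x)$ as follows: on input $w$, use the poly-time inverter to compute a candidate preimage $y=f^{-1}(w)$, \emph{re-check} that $f(y)=w$, and if the check succeeds emit the circuit that $P_U$ produces on input $y$, otherwise emit a trivial circuit that deterministically signals ``I don't know''. Because $f$ is $1$-$1$, on $w=f(x)$ the inverter returns exactly $x$ and the check passes, so $P'$ emits $P_U$'s circuit for $x$ and decides $f(x)$ with the same $\epsilon$-gap. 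Quantum-$\epsilon$-$\texttt{B}$-consistency holds because on any $w=f(y)$ in the range the emitted circuit equals $P_U$'s circuit on $y$, whose confident answer is $\chi_\texttt{A}(y)=\chi_\texttt{B}(f(y))=\chi_\texttt{B}(w)$, while on any $w$ outside the range the re-check fails and $P'$ safely returns $\bot$. As before, $|P'|\leq|P_U|+O(1)$ and $P'$ runs in time $t'(m)=t(\poly(m))+\poly(m)$, giving $Qic^{t'}(f(x):\texttt{B})\lesssim Qic^t(x:\texttt{A})$, i.e.\ $Qic^t(x:\texttt{A})\gtrsim Qic^{t'}(f(x):\texttt{B})$.

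The step I expect to be the main obstacle — and the only place the extra hypotheses are genuinely used — is guaranteeing $\texttt{B}$-consistency of $P'$ on inputs \emph{outside} the range of $f$, where no preimage exists and $P_U$ provides no information. The explicit re-check $f(y)\stackrel{?}{=}w$ resolves this: it detects non-preimages in polynomial time and lets $P'$ fall back to $\bot$, remaining consistent everywhere while still deciding $f(x)$. Injectivity is what ensures the recovered preimage of $f(x)$ is $x$ itself, so that $P_U$'s decision actually applies, and poly-time invertibility is what keeps the preprocessing within a polynomial time budget; dropping either hypothesis breaks the construction, consistent with the lemma asserting part 2 only under these additional assumptions.
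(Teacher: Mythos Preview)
The paper does not actually supply a proof of this lemma; it states the result and immediately explains its consequence (hard classical instances map to hard, quantum-easy instances remain easy) before invoking it to transport queasy \fac instances into \sat. Your argument is correct and is the standard one for how instance complexity behaves under many-one reductions: compose the witnessing program with $f$ in the forward direction for part~1, and with $f^{-1}$ (plus a range-membership re-check) for part~2, observing in each case that the hard-coded reduction adds only $O(1)$ to the description and a fixed polynomial to the running time. Your handling of the one delicate point---$\texttt{B}$-consistency of $P'$ on inputs outside the range of $f$---via the explicit verification $f(y)\stackrel{?}{=}w$ is exactly right, and your identification of why injectivity and poly-time invertibility are each needed is accurate. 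There is nothing to correct; this is the proof the paper omits.
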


This implies that under a poly-time reduction from language \texttt{A} to language \texttt{B}, hard classical instances map to hard and quantum easy instances remain easy, and thus that queasy instances for \texttt{A} get translated to queasy instances for \texttt{B}.
Note that invertibility is needed to map the structure of \fac into the \sat formula. This structure is necessary and often ignored when one naively searches for quantum algorithms under the variational paradigm.

It is not hard to construct a poly-time and invertible  1-1 reduction from \fac to \sat by augmenting the standard Cook-Levin construction with a description of the instance reduced from. Putting everything together leads to the proof of \cref{thm:infqueasy}.



Having proved the existence of infinitely many queasy \sat instances, we also investigate the quantum hard \sat instances (and thus also classically hard).
A first observation is that if \sat $\notin$ BQP then \sat contains infinitely many instances that have more than constant quantum instance complexity: $Qic^{t}(x:\texttt{L}) > c$, for every constant $c$ depending on \texttt{L}, and $t$ any polynomial.
This is because a set \texttt{L} $\in$ BQP iff $\forall x: Qic^{t}(x:\texttt{L}) \leq c$ for some constant $c$ and polynomial $t$. For \sat, this can be improved to logarithmic if NP $\not \subseteq$ BQP and even to linear under Quantum Strong Exponential Time Hypothesis (QSETH)~\cite{buhrman_et_al:LIPIcs.STACS.2021.19,aaronson_et_al:LIPIcs.CCC.2020.16}, which essentially says there is no quantum algorithm for \sat that runs in time $2^{(\frac{1}{2}  -\epsilon)n}$, for $\epsilon>0$.

\begin{theorem} The following hold for infinitely many $x$:
\begin{enumerate}
\item If NP $\not \subseteq$ BQP then $Qic^{t}(x:\texttt{SAT}) \geq  \omega(\log{|x|})$, for $t$ any polynomial.
\item If QSETH is true then  $Qic^{t}(x:\texttt{SAT}) \geq 
    |x|\delta$, for $t(n)=2^{n\gamma}$, with $\delta >0 , \gamma > 0$, and $2\delta + \gamma < \frac{1}{2}$.
\end{enumerate}
\end{theorem}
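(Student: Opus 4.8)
Both parts are proved by contraposition: assuming the stated lower bound fails, I would assemble the short consistent programs promised by a low $Qic$ into a single quantum \texttt{SAT} solver whose existence contradicts the relevant hypothesis. For Part~1, suppose the bound fails, so $Qic^{t}(x:\texttt{SAT})\le d\log|x|$ for all but finitely many $x$, some constant $d$ and polynomial $t$. Then an instance of length $n$ has only $\poly(n)$ candidate descriptions $P_U$ of size $\le d\log n$, and by definition at least one of them is quantum-$\epsilon$-\texttt{SAT}-consistent and decides $x$. The plan is to run all candidates (amplifying the confidence and answer qubits $q_0,q_1$ by repetition, as licensed by the footnote on $\epsilon$) and combine their verdicts into a reliable decider. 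The key structural fact is that a consistent program never outputs a wrong ``I know'', so two consistent candidates never disagree; all ambiguity comes from \emph{inconsistent} candidates producing false positives. I would remove these using the downward self-reducibility of \texttt{SAT}: drive a search-to-decision over the $m\le n$ variables, at each step using the pooled ``yes'' verdicts to choose a branch, and classically verify the resulting full assignment. A verified assignment certifies satisfiability, and along the lexicographically least satisfying assignment every residual formula is itself satisfiable, so some consistent candidate always votes ``yes'' and the correct branch is never missed. Running in $\poly(n)$ quantum time, this yields \texttt{SAT} $\in$ BQP, hence NP $\subseteq$ BQP, contradicting the hypothesis.

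Part~2 is the same construction with the bookkeeping made quantitative. Suppose the bound fails, so $Qic^{t}(x:\texttt{SAT})\le \delta|x|$ for all but finitely many $x$, with $t(n)=2^{\gamma n}$. Now there are $2^{\delta n+O(1)}$ candidate programs, each generating a circuit that runs in time $2^{\gamma n}$, so one pooled evaluation of all candidates on a residual formula costs $\approx 2^{(\delta+\gamma)n}$. The self-reduction/verification layer that turns this pool into a reliable decider contributes a second factor $2^{\delta n}$ — intuitively, the cost of searching among the candidates for one driving the self-reduction to a verifiable witness — so a single satisfiability decision costs $\approx 2^{(2\delta+\gamma)n}$, and the $m\le n$ self-reduction steps add only a polynomial factor. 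This gives a quantum algorithm deciding \texttt{SAT} in time $2^{(2\delta+\gamma)n+o(n)}$. Since $2\delta+\gamma<\frac{1}{2}$, this beats the $2^{(\frac{1}{2}-\epsilon)n}$ barrier, contradicting QSETH. The condition $2\delta+\gamma<\frac{1}{2}$ is thus exactly the statement that the reconstructed solver is faster than QSETH permits.

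The main obstacle, in both parts, is precisely the step flagged above: turning the pool of candidate programs into a \emph{uniform} and \emph{efficient} reliable decider. Consistency guarantees the pool is never individually wrong on a ``yes''/``no'' claim, but it does not reveal \emph{which} candidates are consistent, and a single consistent program need not decide the sub-formulas generated during self-reduction. The delicate part is to show that \texttt{SAT}'s self-reducibility together with classical verifiability of witnesses lets one both (i) avoid non-uniform advice identifying the consistent candidates, and (ii) prevent the false positives of inconsistent candidates from blowing the search up beyond the claimed time bound; it is the accounting of this filtering step that produces the extra $2^{\delta n}$ factor and hence the $2\delta+\gamma<\frac{1}{2}$ threshold. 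I would isolate this reliability-from-consistency argument as a single lemma, prove it once, and then instantiate it with $\poly(n)$ candidates for Part~1 and $2^{\delta n}$ candidates for Part~2.
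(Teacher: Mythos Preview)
Your approach is essentially the one the paper intends. The paper's proof is only a one-line sketch---``follow a similar pruning algorithm as the proof of \cref{lemma:pruning}'' together with a pointer to Orponen et al.---and what you describe is exactly how that pruning algorithm is adapted from \fac to \sat: enumerate all short candidate programs, drive the self-reduction using their pooled ``yes'' votes, and each time a branch you entered is certified unsatisfiable (by recursion/verification) remove every program that voted ``yes'' on it. Consistent programs never say $1$ on an unsatisfiable residual, so they are never removed; hence at every satisfiable residual some surviving consistent decider still votes ``yes'' and the correct branch is never pruned. Each ``recurse and learn unsat'' event deletes at least one program, so the number of such events is at most $|\text{GOOD}|$, giving total work $\approx n\cdot |\text{GOOD}|^2\cdot t(n)$, i.e.\ $\poly(n)$ for Part~1 and $2^{(2\delta+\gamma)n+o(n)}$ for Part~2, which is precisely where the constraint $2\delta+\gamma<\tfrac12$ comes from. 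One clarification worth making explicit in your write-up: the looser phrasing ``search among the candidates for one driving the self-reduction'' does \emph{not} work on its own (a single consistent program may output $\bot$ on sub-formulas), so the lemma you isolate must use the whole pool at every node with the removal rule above---that is the ``pruning'' the paper is pointing to, playing for \sat the role that prefix-uniqueness plays for \fac.
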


The proofs follow a similar pruning algorithm as the proof of \cref{lemma:pruning}. See also \cite{orponen1994instance} for similar results in the classical setting.

The final result in this section shows that under the assumption that co-NP $\nsubseteq$ QCMA/poly (which is a stronger assumption than NP $\not \subseteq$ BQP) there exists an exponentially dense set of hard instances for \texttt{SAT}.

\begin{theorem}
If co-NP $\nsubseteq$ QCMA/poly then there exists a $\delta > 0$ such that for infinitely many $n$ the set $H^{\leq n} = \{ x \mathrel{| Qic^{t}(x:\texttt{SAT}) \geq 
    |x|^\delta} \}$ has density $2^{n^\delta}$, for $\delta>0$ and $t$ any polynomial. (Proof as in Ref~\cite{BH08})
\end{theorem}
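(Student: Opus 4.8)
The plan is to prove the contrapositive: assuming the hard set $H=\{x : Qic^{t}(x:\sat)\ge |x|^{\delta}\}$ is \emph{not} exponentially dense---so that for every $\delta>0$ and almost all $n$ one has $|H^{\le n}|<2^{n^{\delta}}$---I would derive co-NP $\subseteq$ QCMA/poly, contradicting the hypothesis. Since the unsatisfiable formulae $\overline{\texttt{SAT}}$ form a co-NP-complete language, and since $Qic^{t}(x:\sat)=Qic^{t}(x:\overline{\texttt{SAT}})$ up to negating the answer qubit $q_1$, it suffices to place $\overline{\texttt{SAT}}$ in QCMA/poly. The sub-exponential bound on $H$ says precisely that, at each relevant length, all but a sub-exponentially small fraction of instances are \emph{easy}, i.e. admit a program $P_U$ of description length below $n^{\delta}$ whose circuit decides them consistently with \sat.

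The construction of the QCMA/poly verifier then splits into two parts. For an easy instance $\phi$, Merlin's \emph{classical} witness is exactly the short description $P_U$; Arthur generates $U$ from $P_U$ in polynomial time, runs it, and reads the SAT-status off the signalling qubits---this is the classical-proof/quantum-verifier pattern that defines QCMA, and the confidence gap $\varepsilon$ can be boosted by the same repetition-and-majority argument used to make $Qic^{t}$ well defined. The sparse family of hard instances is absorbed into the polynomial advice following the easy--hard machinery of Ref.~\cite{BH08}: using the downward self-reducibility of \sat, each instance either reduces to short certificates already covered by the witness above, or produces one of a controlled number of ``hard strings''; because $H$ is sub-exponentially sparse, these strings are compressed into advice of polynomial length (their self-reductive encoding, not a naive list), and with this advice Arthur can certify non-membership for the remaining instances. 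This yields $\overline{\texttt{SAT}}\in$ QCMA/poly and the desired contradiction.

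The hard part will be \emph{soundness}: a dishonest classical Merlin could submit an \emph{inconsistent} program that falsely certifies a satisfiable $\phi$ as unsatisfiable, and the circuit alone does not expose the cheat. Overcoming this is exactly where the advice and the self-reduction must cooperate---the hard strings stored in the advice let Arthur cross-check a claimed non-membership against the self-reductive structure of \sat and reject any witness inconsistent with it, in direct analogy with the classical argument of Ref.~\cite{BH08}. Two further points need care in the quantum setting, and I expect them to be the technical bottleneck: first, the bounded-error quantum verifier must be amplified to error below $2^{-2n}$ and its internal randomness fixed inside the advice, so that a union bound over the $2^{n}$ inputs of a given length guarantees a single input-oblivious classical advice string that is simultaneously correct for all of them; second, the compression of the hard set into the advice must be arranged so that its size is governed by the sub-exponential density threshold $2^{n^{\delta}}$ rather than by the count of hard instances. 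Discharging these two points completes the argument.
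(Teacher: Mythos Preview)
Your contrapositive strategy---assume $H$ is sub-exponentially sparse and conclude $\overline{\sat}\in\text{QCMA/poly}$ via short $P_U$ witnesses for easy instances, self-reducibility of \sat, and advice absorbing the hard strings---is precisely the adaptation of the Buhrman--Hitchcock argument the paper points to; the paper itself gives no proof beyond the citation, so your sketch is already more detailed than what appears here. One small correction: the remark about ``fixing internal randomness inside the advice'' does not carry over to the quantum setting, since measurement randomness cannot be derandomised by hard-wiring coins; fortunately it is also unnecessary, as parallel repetition amplifies the verifier's error to $2^{-\poly(n)}$ and a union bound over the polynomially many sub-instances visited in the self-reduction (not all $2^n$ inputs) suffices for correctness with a single advice string.
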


The class QCMA/poly contains decision problems for which a classical proof can be efficiently verified by a quantum computer, provided the computer also receives a polynomial-sized advice string that depends only on the length of the input.


{\bf Advantage Landscape --}
Through this work, we have provided a theoretical foundation for classifying computational problems into easy (for both quantum and classical computers), queasy (quantum easy and classically hard), and hard (for both quantum and classical computers), enabling the mapping of the instance landscape.
Consider the \emph{queasiness factor}, defined in terms of time-bounded quantum and classical instance complexities for a given problem instance $x$ with respect to a language \texttt{L} as:

\begin{definition} \label{def:queasinessfactor}
$Ric^{t,t'}(x:\texttt{L}) = 1 - \frac{Qic^{t'}(x:L)}{ic^{t}(x:L)} \in [0,1)$.
\end{definition}

The queasiness factor vanishes for both easy and hard instances but approaches $1$ for maximally queasy instances.
We have proved above that under reasonable assumptions, hard \texttt{SAT} instances are exponentially dense.
Further, when $ic<<C$, easy instances are also exponentially dense.
An important open question that we pose regards the \emph{queasy instance density}.

Note that in contrast to Levin's time-bounded complexity, which simultaneously minimises both description length and runtime ($|P|+\log(t_P)$) \cite{LEVIN198415}, our \emph{external} time bounds provide the flexibility to compare the functionally different classical and quantum computations.

{\bf Discussion --}
Motivated by the practical, heuristic-driven search for quantum advantage, we have established a formal framework for instance-based quantum advantage using the language of instance complexity, defining "queasy" instances as those for which a quantum computer requires a fundamentally simpler program than a classical computer. We proved the existence of infinitely many queasy instances within \sat by reduction from \factoring, demonstrating in-principle quantum advantage for instances of a classic NP-complete problem. This instance-based advantage implies a greater algorithmic utility; a queasy instance points to a quantum heuristic that solves an exponentially larger set of problems than its classical counterpart.

Moving from theory to practice, we acknowledge that the asymptotic time bounds $t(n)$ used in complexity theory omit crucial details like constant factors and the enormous disparity in clock speeds between current quantum and classical hardware, which determine the concrete time-to-solution. Furthermore, our definitions, which rely on achieving a success probability greater than $1/2+\epsilon$, formally apply to the fault-tolerant era, where quantum error correction can suppress physical gate noise to manageable levels.
For NISQ devices without effective QEC, significant gate noise may prevent an algorithm from reliably satisfying this condition. If the success probability cannot be amplified above the $1/2$ threshold, then the quantum complexities as we have defined them are not applicable.
Furthermore, many practical problems also require computing a numerical quantity to a certain precision, or sample from a distribution, rather than solving an exact decision problem; our string-based framework could be extended to these cases which we leave for future work.
Despite these theoretical abstractions, our framework provides a practical methodology for an empirical search for queasy instances. One may use practical proxies of quantities, like approximating distinguishing complexity using lossless compressors, or instance complexity via rigorous resource estimation pipelines (as performed in Ref.~\cite{laakkonen2025quantumadvantageendtoendquantum}). This enables an empirical program of sampling instances from a problem class and mapping out an advantage landscape using the normalised queasiness factor.

This hardware-centric, experimental approach is analogous to the recent history of artificial intelligence, where the advent of GPUs enabled a new era of trial-and-error discovery with long-known concepts like neural networks. As quantum hardware matures, we envision a similar age of quristics. 
In general, one would fix an instance, and then fix a constant time bound $c$ (time-budget), and empirically estimate $Ric^{c,c}$ (using state-of-the-art algorithms) to characterise the queasiness of that instance.
Ultimately, we envision that these foundational tools will not only guide the hunt for quantum advantage, both in theory and in practice, but can also be used to develop novel quantum algorithms for tasks like data compression and protocols for verification of quantum computers.

\bibliographystyle{eptcs}
\bibliography{refs}

\appendix

\section{APPENDIX: Proofs} \label{app:proofs}

\cref{thm:noepsilon} states that \cref{def:QKC} can be made independent of $\varepsilon$.
This can be done via amplification of a circuit for  $\varepsilon$ (and time bound $t$).
By the Chernoff bound, this yields a circuit with probability of generating string $x$ that is exponentially close to $1$ at the cost of only a polynomial increase in the time bound.
To apply the bound, we use that any subnormalized list of non-negative numbers has a sufficiently large gap between two probabilities as stated in the following Lemma:
First, we require the following Lemma:

\begin{lemma}\label{thm:gaplem}
Let $p_1 \ge p_2 \ge \cdots \ge p_k \ge p_{k+1} := 0$ be real numbers such that $\sum_{i=1}^k p_i \le 1$. Then there exists an index $1 \le i < k$ for which
$p_i - p_{i+1} \ge \tfrac{p_1^2}{2+p_1}$.
\end{lemma}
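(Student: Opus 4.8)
The plan is to set $g := \max_{1 \le i \le k}\,(p_i - p_{i+1})$, the largest consecutive gap (with the terminal gap $p_k - p_{k+1} = p_k$ included, since $p_{k+1} := 0$), and to show that the normalisation $\sum_i p_i \le 1$ forces $g \ge \tfrac{p_1^2}{2+p_1}$. The cleanest route I would take is a \emph{telescoping-of-squares} identity rather than any estimation. Because the sequence is non-increasing and ends at $p_{k+1}=0$,
\[
p_1^2 \;=\; p_1^2 - p_{k+1}^2 \;=\; \sum_{i=1}^k \bigl(p_i^2 - p_{i+1}^2\bigr) \;=\; \sum_{i=1}^k (p_i + p_{i+1})(p_i - p_{i+1}).
\]

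Next I would bound each difference $p_i - p_{i+1}$ by $g$. Since the $p_i$ are non-negative and non-increasing, every difference lies in $[0,g]$ while every sum $p_i + p_{i+1}$ is non-negative, so
\[
p_1^2 \;\le\; g\sum_{i=1}^k (p_i + p_{i+1}) \;=\; g\Bigl(2\sum_{i=1}^k p_i - p_1\Bigr) \;\le\; g\,(2 - p_1),
\]
where the middle equality is the reindexing $\sum_{i=1}^k p_{i+1} = \sum_{i=1}^k p_i - p_1 + p_{k+1}$ and the last step uses $\sum_{i=1}^k p_i \le 1$. Rearranging gives $g \ge \tfrac{p_1^2}{2-p_1} \ge \tfrac{p_1^2}{2+p_1}$ (using $p_1 \ge 0$), which is exactly the claimed bound and in fact slightly stronger; the maximising index $i$ is then the one the statement asks for.

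The one thing I expect to require care -- the only real obstacle -- is the treatment of the terminal gap. Including $i=k$ (the drop from $p_k$ down to $0$) in the maximum is essential: for the uniform list $p_1=\cdots=p_k=1/k$ every interior difference vanishes, so a version restricted to $i<k$ would be false, and it is precisely the convention $p_{k+1}:=0$ that supplies the nonzero gap that saves the statement. Once that is understood, the telescoping identity is exact and no delicate analysis remains.

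As a cross-check, and in case a more elementary argument is preferred, I would note the alternative of bounding $p_i \ge p_1 - (i-1)g$ (each step drops by at most $g$) and comparing the truncated arithmetic sum $\sum_{i\ge 1}\max\{0,\,p_1-(i-1)g\} \ge \tfrac{p_1^2}{2g}$ against the budget $1$; the positive lower-bound terms force enough indices $i \le k$ for this to be legitimate, and it reproduces the same quadratic relation $g \gtrsim p_1^2$, confirming the bound up to the additive constant in the denominator.
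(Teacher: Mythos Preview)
Your proof is correct for the intended statement (with $1 \le i \le k$), and you rightly flag that the printed $i < k$ must be a slip: your uniform example $p_1=\cdots=p_k=1/k$ kills the strict version, and indeed the paper's own application of this lemma in the proof of Theorem~\ref{thm:noepsilon} invokes it with $1 \le j \le k$.

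Your route is genuinely different from the paper's. The paper first normalises $k$ without loss of generality to lie in $[2/p_1,\,2/p_1+1]$ (by appending zeros or truncating the tail), then uses the linear lower bound $p_i \ge p_1-(i-1)g$ to derive, by contradiction, that $\sum_i p_i > 1$; this is precisely your ``alternative'' sketch, with the WLOG step playing the role of your ``enough indices $i\le k$'' remark. Your main telescoping-of-squares argument sidesteps any case analysis on $k$ and yields the strictly sharper constant $p_1^2/(2-p_1)$, so the paper's bound drops out a fortiori. The trade-off is minimal: the paper's argument is marginally more elementary (only an arithmetic-progression sum), while yours is a couple of lines shorter and needs no adjustment of the sequence.
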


\begin{proof}
Without loss of generality (by appending $0$'s or removing probabilities) we may assume that $\tfrac 2 {p_1} \le k \le \tfrac 2{p_1}+1$.
Suppose that for all $1 \le i < k$ we had
$p_i - p_{i+1} < \frac{p_1^2}{2+p_1}$.

Then, since $p_i = p_1 + \sum_{j=1}^{i-1}p_{j+1} - p_{j}$
\begin{align*}
    1 = \sum_i p_i &= kp_1 - \sum_{i=1}^{k-1} i(p_i-p_{i+1})\\
    &> kp_1 - \frac{(k-1)k}{2}\frac{p_1^2}{2+p_1}\\
    &\ge 2 - \frac2{p_1}\frac{2+p_1}{2}\frac{p_1}{2+p_1}
    = 1
\end{align*}
yielding a contradiction.
\end{proof}

{\bf Proof} of \cref{thm:noepsilon}:
\begin{proof}
    Let $U$ be a circuit producing $y$ with probability at least $\varepsilon$ in time bounded by $t$ acting on, say, $N\ge \ell_y$ qubits.
    Let $p_1\ge \dots\ge p_k$ be the non-zero probabilities with which $U$ produces the strings $x_i$, one of which being $y$ (say $x_m = y$).
    By \cref{thm:gaplem} there is $1\le j \le k$ such that the gap between $p_j$ and $p_{j+1}$ is at least $2\delta := \frac{\varepsilon^2}{2 + \varepsilon}$ ($p_{k+1}=0$).
    We can assume that $m\le j$ by applying the lemma only to the probabilities $p_i$ with $m\le i$ (again due to subnormalization).
    We let $L$ be the list of the $j$ most likely strings,  ordered lexicographically, and denote by $a$ the position of $y$ in this list.

    Let $n\in\mathbb{N}$ and take $n$ copies of $U$ denoting by $n_i$ the frequency of $x_i$.
    Each $n_i$ is distributed binomially, so the expected number of occurrences of $x_i$ is $np_i$.
    By the Chernoff bound \cite[Thm.~4.4 \& 4.5]{Mitzenmacher_Upfal_2005} we thus have:
    \begin{align*}
        x_i\not\in L: P[n_i/n \ge (1+\delta_i)p_i]&\le e^{-n\delta_i^2p_i/3}&&,0<\delta_i\\
        x_i\in L: P[n_i/n \le (1-\delta_i)p_i]&\le e^{-n\delta_i^2p_i/2}&&, 0<\delta_i\le1.
    \end{align*}
    Let $\tilde L$ be the list of the $j$ most frequent strings, again ordered lexicographically.
    Letting $\delta_i = \delta/p_i$ and noting that for $x_i\in L$ indeed $\delta_i \le 1$ we find that the probability of $x_i$ being misclassified -- that is $x_i\in \tilde L$ although $x_i\not\in L$ or the other way around -- is bounded from above by $e^{-n\frac{\delta^2}{3p_i}}\le e^{-n(\frac{\delta^2}{3} + \lambda)}$, $\lambda = \frac{\delta^2}3(1-p_1)/p_1$.
    By the union bound we then have
    $$P[\tilde L\neq L] \le ke^{-n (\frac{\delta^2}3+\lambda)}\le e^{-n \frac{\delta^2}3}$$
    for $n\ge \ln(k)/\lambda$.
    Taking the $a$-th element of $\tilde L$ then returns $y$ with probability $\ge 1- r^{-n}$ with $r = e^{\delta^2/3}$.

    Taking $U$ to be a circuit with minimal length description $P_U$, cf.\ \cref{def:QKC}, and concatenating this program with binary descriptions of $n, j, a$ gives a program producing $x$ with probability at least $1-r^{-n}$ in time $nt$.
    
\end{proof}

{\bf Proof} of \cref{rem:icleqC}:

\begin{proof}
Let $P_x$ be the shortest possible program that runs in time $t\leq n^c$, where $c$ is a constant, and generates the $n$-bit string $x$, ie $|P_x|=C^t(x)$.

Consider the following program $Q_x$, which for inputs $y$ of length $|y|=m\leq n$ runs as follows.\\
$Q_x(y)$:\\
-- Run $P_x$ for $m$ steps\\
-- If $P_x$ halts in steps $\leq m$, it generates $P_x=x$\\
----- If $y=x$, then return $Q_x(y) = \chi_\texttt{L}(x)$\\
----- Else if $y\neq x$, then return $Q_x(y)=\bot$\\
-- Else if $P_x$ does not halt in steps $\leq m$, return $Q_x(y)=\bot$

In other words, the program $Q_x$ has the answer to the input instance $x$ hardcoded.
The runtime of $Q_x$ is the runtime of $P_x$ plus the time to compare equality of $y$ and $x$ plus the time to print the single bit $\chi_\texttt{L}(x)$, so we have $t' = t \log t + n + 1$.
The $\log t$ factor comes from simulating the Turing machine that runs $P_x$ using a Turning machine that runs $Q_x$ \cite{sipser13}.
The program $Q_x$ is \texttt{L}-consistent and decides whether $x$ belongs in the language \texttt{L}. Consistency with \texttt{L} is important as without it the bound would be trivial; it would be the length of the program that always just prints a single bit that happens to be $\chi_\texttt{L}(x)$. Its size is $|Q_x|=|P_x| + const$ and so it sets the upper bound for $ic^t(x:L)$.

\end{proof}

{\bf Proof} of \cref{rem:icleqCD}:

\begin{proof}
    This is obvious from the proof of \cref{rem:icleqC} where a trivial program $Q_x$ is considered, which recognises string $x$ as a subroutine.
\end{proof}

To prove \cref{thm:usefulness} we first need the following:
\begin{lemma}\cite{BuhrmanFortnow97}\label{lem:set-chi}
    For any  $A^{\leq n}$ and for all  $x \in A^{\leq n}$:
$CD^{p, A^{\leq n}}(x) \leq 2 \log(\|A^{\leq n}\|) + O(\log n)$ for some polynomial $p$.
\end{lemma}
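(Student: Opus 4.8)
The plan is to exploit the oracle access to $A^{\le n}$ so that the distinguishing program only has to separate $x$ from the \emph{other} elements of $A^{\le n}$, rather than from all strings of length $\le n$. Write $N := \|A^{\le n}\|$. Any candidate program may first query the oracle to decide whether its input lies in $A^{\le n}$, rejecting immediately if not; what remains is to tell $x$ apart from the at most $N-1$ other members of $A^{\le n}$. I would accomplish this with a fingerprint whose description is short, the whole point being to make the description length scale with $\log N$ rather than with $n$.

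Concretely, I would use a random-prime (Rabin--Karp style) fingerprint. Identify each string of length $\le n$ with a distinct integer $\hat y < 2^{n+1}$. For the designated $x$ and any $y \in A^{\le n}$ with $y \ne x$, the difference $\hat x - \hat y$ is a nonzero integer of absolute value below $2^{n+1}$, and hence has at most $n+1$ distinct prime divisors. Choose a bound $B$ with $\pi(B) > 2(n+1)N$; by the prime number theorem $B = O\!\big(Nn\log(Nn)\big)$ suffices, so $\log B = \log N + O(\log n)$. For each fixed $y$ at most $n+1$ primes $q \le B$ satisfy $\hat x \equiv \hat y \pmod q$, so by a union bound over the fewer than $N$ choices of $y$ the number of ``bad'' primes is below $(n+1)N < \pi(B)/2$. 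Hence there exists a prime $q \le B$ with $\hat x \not\equiv \hat y \pmod q$ for every $y \in A^{\le n}\setminus\{x\}$.

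The program $P$ then hard-codes this prime $q$ together with the value $v := \hat x \bmod q$, and on input $y$ it (i) queries the oracle to check $y \in A^{\le n}$, rejecting otherwise, and (ii) accepts iff $\hat y \bmod q = v$. By construction $P(x)=1$, while every $y \ne x$ is rejected either at step (i) or, if $y\in A^{\le n}$, at step (ii) by the choice of $q$; all arithmetic is polynomial in $n$, giving the polynomial time bound $p$. For the length, encoding $q$ costs $\log B = \log N + O(\log n)$ bits and encoding $v < q$ costs another $\log N + O(\log n)$ bits, with an $O(\log n)$ overhead for self-delimiting the two numbers and the fixed program skeleton; summing gives $CD^{p,\,A^{\le n}}(x) \le 2\log N + O(\log n)$, as claimed. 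The factor of two is exactly the price of naming the hash (the prime $q$) and then recording its value on $x$. The main obstacle is the balancing act in the choice of $B$: it must be large enough that the union bound over all $N$ strings still leaves a usable prime, yet small enough that both $q$ and the fingerprint $v$ fit in $\log N + O(\log n)$ bits. This tension is what pins down the constant $2$, and making the encoding genuinely self-delimiting (so that the additive $O(\log n)$ is honest) is the one place demanding care.
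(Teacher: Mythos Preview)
The paper does not actually prove this lemma; it simply cites \cite{BuhrmanFortnow97} and uses the result as a black box in the proof of \cref{thm:usefulness}. Your prime-fingerprinting argument is correct and is in fact essentially the proof given in the cited reference: reject inputs outside $A^{\le n}$ via the oracle, then separate $x$ from the remaining $N-1$ elements by hard-coding a prime $q$ of size $O(Nn\log(Nn))$ avoiding all the differences $\hat x-\hat y$, together with the residue $\hat x \bmod q$. The counting (at most $n+1$ prime divisors per difference, union bound over $<N$ competitors, PNT to size $B$) and the resulting $2\log N + O(\log n)$ description length are exactly as in the original. Two cosmetic points worth tightening: make explicit that inputs of length $>n$ are rejected outright (so the program is total), and spell out the injective encoding $\hat y$ (e.g.\ prepend a leading $1$) so that distinct strings of different lengths map to distinct integers.
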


{\bf Proof} of \cref{thm:usefulness}:

\begin{proof}
Note that since $x \in S^{\leq n}$ and $S$ can be computed by program $P$, we have a $CD^{t+p}$ description of $x$, by Lemma~\ref{lem:set-chi}, of size $m= 2\log(\|S^{\leq n}\|) + O(\log n) +|P|$. Because  $CD^{t'}(x) - ic^t(x:\texttt{L}) = d$ we have that $m \geq |P| + d$. Hence, by Lemma~\ref{lem:set-chi} we have that $S^{\leq n} \geq 2^{(d/2) -c} +O(\log n)$.
\end{proof}

{\bf Proof} of \cref{lemma:pruning}:

\begin{proof}
We prove this by contradiction. Let $\delta < \epsilon$ and assume that for almost all  instances $z$ in \texttt{FAC}: $ic^{t}(z:\texttt{FAC})\leq n^\delta$ for $t(n)\in O(2^{n^\delta})$. We will see that  \factoring can be solved  in time $2^{6n^{\delta}} < 2^{n^\epsilon}$ contradicting the assumption on the hardness of \factoring. 
The idea is to use a pruning algorithm that works as follows. Fix some input $\pair{x,a}$, $n=|x|$. Assume that the input $\pair{x,a}$ is in \fac.  The pruning algorithm will keep track of a set of potential \fac-consistent programs GOOD. Initially GOOD contains all the programs that have size less than or equal $n^\delta$. Observe that the size  $|\text{GOOD}| = 2^{n^\delta}$ which we call $m$. We will also keep track of a set POS of possible  extensions such that there exists a $b\in$ POS such that $\pair{x,ab} \in \fac$.  Initially, we set POS = $\{0,1\}^{n^\delta +1}$, all the possible strings of size $n^\delta +1$. Note that $|\text{POS}| = 2*m$. We run all the programs in GOOD on all the inputs $\pair{x,ab}$ for every $b \in$ POS. Note that if $P$ is a  \fac-consistent program then it must be the case that there is at most one $b\in \text{POS}$ such that $P(\pair{x,ab}) = 1$, because there is exactly one $b$ such that $\pair{x,ab} \in \fac$. So whenever for program $P \in$ GOOD, $P(\pair{x,ab})=1$ and   $P(\pair{x,ab'})=1$ for different $b$ and $b' \in$ POS, we know $P$ is not \fac-consistent and we can remove it from GOOD. 
Since for every $P \in$ GOOD there is at most one $b \in $ POS such that $P(\pair{x,ab}) =1$ we remove all the $b$ from POS for which there is no $P \in$ GOOD such that $P(\pair{x,ab})= 1$. We  have reduced $|\text{POS}|$ by at least half as we have $|\text{POS}| \leq m$. Note that if $\pair{x,a}\in \fac$ then there is a $b \in$ POS such that $\pair{x,ab} \in \fac.$ So this pruning step did not throw away the unique partial extension of $a$ to the largest prime factor of $x$.

Next we append to each of the strings  $b \in$ POS  all $c \in \{0,1\}^k $, for  the minimum $k$ such that $2^k|\text{POS}| \geq 2*m.$ We now repeat the  procedure described above: Run all programs in GOOD on all inputs $\pair{x,ab}$ with $b\in$ POS. Remove the inconsistent ones and reduce $|\text{POS}|$ $\leq m$. Observe that after  each round $|\text{POS}| \leq m$ and that in each round the $|b| \in$ POS grow by at least 1 bit. So after at most $n$ rounds there exist a $b \in$ POS such that $ab$ is a factor of $x$. 

Starting the above  "extend-and-prune" procedure one can find the largest factor $p$ of $x$ by starting it with $\pair{x,\lambda}\footnote{The empty string 
$\lambda$ is a substring, prefix, and suffix of all strings.}$. Lets calculate the time this takes. Each round costs at most $|\text{POS}|*|\text{GOOD}|*t(2n) \leq  4*m*m*t(2n)$, where $t(.)$ is the running time of the individual instance complexity programs. Since we have at most $n$ rounds, the  total running time is upper bounded by  $4*m^2*t(2n)*n \leq  4n2^{4n^\delta}< 2^{n^\epsilon}$.

\end{proof}

\end{document}